\theoremstyle{plain}
\newtheorem{lemma}{Lemma}
\newcommand{\ket} [1] {\vert #1 \rangle}
\newcommand{\norm}[1]{\left\lVert{#1}\right\rVert}
\def\bbE{\mathbb{E}}
\begin{document}

\title{A Note on ``Quantum Algorithm for Linear Systems of Equations'' }

\author{Yong-Zhen Xu$^1$, Yifan Huang$^1$, Zekun Ye$^1$ and Lvzhou Li$^{1,2,}$ \footnote{Electronic mail: lilvzh@mail.sysu.edu.cn (L.Li)} \vspace{0.2cm}}

\affiliation{$^1$ Institute of Computer Science Theory, School of Data and Computer Science, Sun Yat-sen University, Guangzhou 510006, China }
\affiliation{$^2$ The Key Laboratory of Machine Intelligence and Advanced Computing (Sun Yat-sen University) Ministry of Education, Guangzhou 510006, China}

\date{\today}

\begin{abstract}
Recently, an efficient quantum algorithm for linear systems of equations introduced by Harrow, Hassidim, and Lloyd, has received great concern from the academic community. However, the error and complexity analysis for this algorithm seems so complicated that it may
not be applicable to other filter functions for other tasks. In this note, a concise proof is proposed. We hope that it may inspire some novel HHL-based algorithms that can compute $F(A)\ket{b}$ for any computable $F$.
\end{abstract}

\maketitle

Solving linear systems of equations has been a central problem  in virtually all field of science and engineering. Recently, an efficient quantum algorithm for the problem was proposed by Harrow, Hassidim, and Lloyd ~\cite{HHL2009} (called HHL algorithm for short), which shows an exponential speed-up over the best known classical algorithm under certain conditions.  This algorithm has been considered to a new template showing how quantum computers
might be used to exponentially speed up certain problems, and may bring a series of applications, especially in the field of machine learning and big data \cite{Childs2009,Aaronson2015,BWPRWL2017}. Actually,  based on this seminal work~\cite{HHL2009}, some novel quantum algorithms were proposed, including Least-squares fitting~\cite{WBL2012}, Quantum support vector machine~\cite{RML2014}, Quantum PCA~\cite{LMP2014}, solving linear differential equations~\cite{Berry2014}, and so on.  It seems that how to find more nontrivial applications and further generalizations of the  work~\cite{HHL2009} has attracted much attention from  the academic community. In addition,
some other quantum algorithms using different ideas  have also been  presented for the linear systems problem~\cite{CJS2013,Ambainis2012,CKS2017,KP2017,WZP2017}.

Note that a full version of the paper~\cite{HHL2009} is Ref. \cite{0811.3171v3}. For consistency, we use the same symbols from Ref. \cite{0811.3171v3}. It is readily seen that the second inequality (A5) of Theorem 1 in Ref. \cite{0811.3171v3} is a core result for the error and complexity analysis of the HHL algorithm.
In the process of proving this result, Lemma 3 in Ref. \cite{0811.3171v3} plays a crucial role. However, both the proof for Lemma 3  and the proof for (A5) based on Lemma 3  seem too complicated, and they may not be applicable to other filter functions for other tasks. By the way, the proof of Lemma 3 was incomplete since $\tilde g > 0 $ was not considered when $\lambda \geq 1/\kappa$. A complete proof is given in the appendix which comfirms the correctness of Lemma 3. In this note, we propose a concise proof for (A5) based on Lemma 2 given by us. This new proof can alleviate the difficulties caused by the filter functions $f$ and $g$ in error analysis~\cite{0811.3171v3}. We hope that it may inspire some novel HHL-based algorithms which can compute $F(A)\ket{x}$ for any computable $F$. 

We start with two lemmas to be needed later.

\begin{lemma} \label{fg_Lip}
	The functions $f$ and $g$ are $O(\kappa)$-Lipschitz, meaning that for any $\lambda_1\neq\lambda_2$,
	\begin{align}\label{Ineq-a}
		\left| f\left( \lambda_1\right)  - f(\lambda_2)\right| \leq c_1\kappa \left| \lambda_1 - \lambda_2 \right|,
	\end{align}
	and
	\begin{align}\label{Ineq-b}
	\left|g(\lambda_1) - g(\lambda_2)\right|\leq c_1\kappa \left|\lambda_1 - \lambda_2\right|,
	\end{align}
	for  some $c_1= O(1)$.
\end{lemma}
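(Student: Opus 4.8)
The plan is to read the claim off the explicit piecewise definitions of the filter functions $f$ and $g$ given in Ref.~\cite{0811.3171v3}. Recall their structure: there is a transition window $W=[1/\kappa',1/\kappa]$ with $\kappa'=\Theta(\kappa)$, hence $|W|=1/\kappa-1/\kappa'=\Theta(1/\kappa)$, such that both $f$ and $g$ are \emph{constant} for $\lambda\ge 1/\kappa$ and for $\lambda\le 1/\kappa'$, while on $W$ each equals a trigonometric function of a phase $\theta(\lambda)$ that is affine in $\lambda$, with amplitude $O(1)$ and phase slope $|\theta'|=\Theta(1/|W|)=\Theta(\kappa)$. These transition pieces were chosen in Ref.~\cite{0811.3171v3} precisely so that $f$ and $g$ are continuous at the two junction points $\lambda=1/\kappa$ and $\lambda=1/\kappa'$.

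First I would bound the derivatives piecewise. On the two constant pieces $f'=g'=0$. On $W$, the derivative of each function is the product of its $O(1)$ amplitude, its phase slope $\Theta(\kappa)$, and a trigonometric factor of modulus at most $1$, so $|f'(\lambda)|=O(\kappa)$ and $|g'(\lambda)|=O(\kappa)$ throughout $W$. Collecting all the constants into a single universal $c_1=O(1)$ gives $|f'|,|g'|\le c_1\kappa$ wherever these derivatives exist.

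Second, I would upgrade the per-piece bounds to the global statements \eqref{Ineq-a}--\eqref{Ineq-b}. Since $f$ and $g$ are continuous on the whole range of $\lambda$ and are $C^1$ on each subinterval of the finite partition $\{1/\kappa',\,1/\kappa\}$ with derivative bounded by $c_1\kappa$, for arbitrary $\lambda_1<\lambda_2$ one subdivides $[\lambda_1,\lambda_2]$ at whichever of $1/\kappa',1/\kappa$ it contains, applies the mean value theorem on each sub-subinterval, and sums with the triangle inequality; the increments telescope to $|f(\lambda_1)-f(\lambda_2)|\le c_1\kappa|\lambda_1-\lambda_2|$, and identically for $g$. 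This is just the standard fact that a continuous, piecewise-$L$-Lipschitz function is $L$-Lipschitz, and the finitely many non-differentiable points (the junctions) do not obstruct it.

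There is no real obstacle here; the only point needing care is to confirm from the definitions in Ref.~\cite{0811.3171v3} that the amplitude is indeed $O(1)$ and the window width $\Theta(1/\kappa)$, and --- if the relevant spectrum includes negative $\lambda$ --- that the symmetric extension of $f$ and $g$ inherits the same bound, which it does since reflection preserves Lipschitz constants. Everything else is a one-line mean-value-theorem computation.
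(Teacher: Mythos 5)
Your overall strategy (bound $|f'|$ and $|g'|$ piecewise, then use the standard fact that a continuous, piecewise-Lipschitz function is globally Lipschitz) is exactly the route the paper takes, which simply observes that $f$ and $g$ are continuous and differentiable except at $1/\kappa$ and $1/(2\kappa)$ and bounds both derivatives by $\pi\kappa/2$, yielding $c_1=\pi/2$. However, your proposal contains a genuine error in the premise: it is not true that $f$ is constant for $\lambda\ge 1/\kappa$. In Ref.~\cite{0811.3171v3} the filter functions on the well-conditioned part are $f(\lambda)=\frac{1}{2\kappa\lambda}$ and $g(\lambda)=0$ for $\lambda\ge 1/\kappa$ (this is precisely the inversion the algorithm is built to perform, and it is what Case~1 of Lemma~\ref{fg_sum} in this note uses, where $f(\lambda_1)-f(\lambda_2)=\frac{1}{2\kappa}\bigl(\frac{1}{\lambda_1}-\frac{1}{\lambda_2}\bigr)$). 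Only $g$ is constant on the two outer pieces, and only $f$ vanishes below $1/(2\kappa)$; your claim that both functions are flat outside the transition window $[1/(2\kappa),1/\kappa]$ misdescribes the object being analyzed, so your derivative bound on that region ($f'=0$) is asserted for a false reason.

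The gap is easy to fill but it is not free: on $[1/\kappa,\infty)$ one must compute $|f'(\lambda)|=\frac{1}{2\kappa\lambda^2}$ and use the restriction $\lambda\ge 1/\kappa$ to get $|f'(\lambda)|\le\frac{\kappa}{2}$; without that restriction this derivative is unbounded as $\lambda\to 0$, so the $O(\kappa)$ bound genuinely depends on which piece you are on. With that case added, all pieces satisfy $|f'|,|g'|\le\frac{\pi\kappa}{2}$ (the transition piece has amplitude $\frac{1}{2}$ and phase slope $\pi\kappa$, as you say), and your mean-value-theorem subdivision at the two junction points then gives the lemma with $c_1=\pi/2$, in agreement with the paper.
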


\begin{proof}
	The two functions $f$ and $g$ are continuous and differentiable except at $\frac{1}{\kappa}$ and $\frac{1}{2\kappa}$, so we need only prove that the absolute value of the two derivatives are bounded. For all cases, the upper bounds of $\left| \frac{d}{dx}f(x) \right| $ and
	$ \left|\frac{d}{dx}g(x)\right| $  are $\frac{\pi\kappa}{2}$. Thus, $c_1 \geq \frac{\pi}{2}$. This completes the proof.
\end{proof}

\begin{lemma}\label{fg_sum}
	\begin{align}\label{Ineq-c}
	&\left|f(\lambda_1) - f(\lambda_2) \right|^2 + |g(\lambda_1) - g(\lambda_2)|^2 \le \nonumber\\
	 & c_2\kappa^2(\lambda_1-\lambda_2)^2 |f^2(\lambda_1) +
	 f^2(\lambda_2)+ g^2(\lambda_1)+g^2(\lambda_2)|,
	\end{align}
	for  some $c_2= O(1)$.
\end{lemma}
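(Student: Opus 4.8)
The plan is to deduce~\eqref{Ineq-c} from Lemma~\ref{fg_Lip} together with a single extra, purely pointwise property of the filter functions: $f$ and $g$ are never simultaneously close to $0$, i.e.\ $f(\lambda)^2+g(\lambda)^2\ge c_0$ for some absolute constant $c_0>0$ and every $\lambda$ that can arise as an eigenvalue estimate. Granting this, the bracketed factor $f^2(\lambda_1)+f^2(\lambda_2)+g^2(\lambda_1)+g^2(\lambda_2)$ appearing on the right-hand side of~\eqref{Ineq-c} is always at least $2c_0$, hence it can absorb the $O(1)$ constant produced by Lipschitz continuity.

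Concretely, I would first square \eqref{Ineq-a} and \eqref{Ineq-b} and add them, obtaining $|f(\lambda_1)-f(\lambda_2)|^2+|g(\lambda_1)-g(\lambda_2)|^2\le 2c_1^2\kappa^2(\lambda_1-\lambda_2)^2$. Next I would read off the lower bound $f(\lambda)^2+g(\lambda)^2\ge c_0$ from the piecewise definitions of $f$ and $g$ in Ref.~\cite{0811.3171v3}: for $\lambda\ge 1/\kappa$ one has $f(\lambda)=1$, for $\lambda\le 1/(2\kappa)$ one has $g(\lambda)=1$, and on the interpolation interval $[1/(2\kappa),1/\kappa]$ the two functions are essentially the sine and cosine of a common angle running from $0$ to $\pi/2$; hence $f^2+g^2\ge c_0$ --- in fact $f^2+g^2=1$ in the clean construction --- on the whole range of the phase register. (On the branch $\lambda\ge 1/\kappa$ this is immediate from $f\equiv 1$; the $\tilde g>0$ subtlety flagged for Lemma~3 concerns the post-phase-estimation amplitudes, not the functions $f,g$ themselves, and so does not enter here.) Therefore the bracket in~\eqref{Ineq-c} is $\ge 2c_0$.

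Combining the two steps (the case $\lambda_1=\lambda_2$ being trivial), for $\lambda_1\ne\lambda_2$ we obtain $|f(\lambda_1)-f(\lambda_2)|^2+|g(\lambda_1)-g(\lambda_2)|^2\le 2c_1^2\kappa^2(\lambda_1-\lambda_2)^2\le (c_1^2/c_0)\,\kappa^2(\lambda_1-\lambda_2)^2\,[f^2(\lambda_1)+f^2(\lambda_2)+g^2(\lambda_1)+g^2(\lambda_2)]$, which is exactly~\eqref{Ineq-c} with $c_2=c_1^2/c_0=O(1)$. The only genuine work is the second step, i.e.\ reading off the uniform bound $f^2+g^2\ge c_0$ from the definitions, in particular on the interpolation interval and at the non-differentiable junctions $1/\kappa$ and $1/(2\kappa)$; everything else is immediate from Lemma~\ref{fg_Lip}. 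I expect that verification to be the main obstacle, albeit a comparatively mild one.
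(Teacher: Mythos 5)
There is a genuine gap, and it sits exactly in the one case that requires real work. Your argument hinges on the uniform pointwise bound $f^2(\lambda)+g^2(\lambda)\ge c_0$ with $c_0=\Omega(1)$, justified by the claim that ``for $\lambda\ge 1/\kappa$ one has $f(\lambda)=1$.'' That is not the HHL definition: on that branch $f(\lambda)=\frac{1}{2\kappa\lambda}$ and $g(\lambda)=0$ (this is visible in the paper's own Case~1 computation, where $|f(\lambda_1)-f(\lambda_2)|^2+|g(\lambda_1)-g(\lambda_2)|^2=\frac{1}{4\kappa^2}\bigl(\frac{1}{\lambda_1}-\frac{1}{\lambda_2}\bigr)^2$ and the bracket reduces to $\frac{1}{4\kappa^2\lambda_1^2}+\frac{1}{4\kappa^2\lambda_2^2}$). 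Since eigenvalues range up to order $1$, $f^2+g^2$ can be as small as $\Theta(1/\kappa^2)$ there, so no absolute constant $c_0$ exists. Consequently, when both $\lambda_1,\lambda_2\ge 1/\kappa$ your absorption step fails quantitatively: Lipschitz continuity only gives $|f(\lambda_1)-f(\lambda_2)|^2+|g(\lambda_1)-g(\lambda_2)|^2\le 2c_1^2\kappa^2(\lambda_1-\lambda_2)^2$, while the right-hand side of \eqref{Ineq-c} is only about $c_2(\lambda_1-\lambda_2)^2$ when $\lambda_1,\lambda_2=\Theta(1)$, forcing $c_2=\Omega(\kappa^2)$ rather than $O(1)$. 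The global Lipschitz constant $O(\kappa)$ is attained only near $\lambda=1/\kappa$; for larger eigenvalues the actual increment is much smaller, namely $\frac{|\lambda_1-\lambda_2|}{2\kappa\lambda_1\lambda_2}$, and it is precisely this refinement that the paper exploits in its Case~1 via the identity $\frac{1}{\lambda_1}-\frac{1}{\lambda_2}=\frac{\lambda_2-\lambda_1}{\lambda_1\lambda_2}$, the bound $\lambda_1\lambda_2\ge 1/\kappa^2$, and $\frac{1}{\lambda_1\lambda_2}\le\frac12\bigl(\frac{1}{\lambda_1^2}+\frac{1}{\lambda_2^2}\bigr)$, so that the (small) bracket itself supplies the needed $1/\lambda^2$ factors.

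For the remaining configurations your plan is essentially the paper's Cases~2--6: whenever at least one of $\lambda_1,\lambda_2$ lies below $1/\kappa$, that argument's contribution to the bracket is exactly $1/4$ (on the interpolation interval $f=\frac12\sin$, $g=\frac12\cos$ of a common angle, and below $1/(2\kappa)$ one has $f=0$, $g=\frac12$), so squaring and adding the Lipschitz bounds of Lemma~\ref{fg_Lip} and taking $c_2\ge 8c_1^2$ closes those cases just as in the paper. So your proposal would be repaired by restricting the lower-bound-on-the-bracket argument to the cases where some argument is below $1/\kappa$, and adding the separate exact computation for the case $\lambda_1,\lambda_2\ge 1/\kappa$; as written, however, the central claim $f^2+g^2\ge c_0=\Omega(1)$ is false and the proof does not go through.
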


\begin{proof}
	 We need to consider nine cases since $f$ and $g$ are piecewise functions. However, since this inequality (\ref{Ineq-c}) has symmetry about $\lambda_1$ and $\lambda_2$, we only need to consider the following six cases when $\lambda_1 > \lambda_2$,
	 \begin{equation*}
	 \left\{
	 \begin{aligned}
	 &\text{case 1: } \lambda_1 \geq 1/\kappa, \lambda_2 \geq 1/\kappa, \\
	 &\text{case 2: } \lambda_1 \geq 1/\kappa, 1/2 \kappa \leq \lambda_2 < 1/\kappa, \\
	 &\text{case 3: } \lambda_1 \geq 1/\kappa, \lambda_2 < 1/ 2 \kappa, \\
	 &\text{case 4: } 1/2 \kappa \leq \lambda_1 < 1/\kappa, 1/2 \kappa \leq \lambda_2 < 1/\kappa, \\
	 &\text{case 5: }1/2 \kappa \leq \lambda_1 < 1/\kappa, \lambda_2 < 1/ 2 \kappa, \\
	 &\text{case 6: } \lambda_1 < 1/ 2 \kappa, \lambda_2 < 1/ 2 \kappa.
	 \end{aligned}
	 \right.
	 \end{equation*}
	
	 \textbf{Case 1:} we have
	 \begin{align}
	 &|f(\lambda_1) - f(\lambda_2)|^2 + |g(\lambda_1) - g(\lambda_2)|^2 \nonumber\\
	 &= \frac{1}{4k^2}\left(\frac{1}{\lambda_1}-\frac{1}{\lambda_2} \right) \nonumber\\
	 &= \frac{1}{4k^2}\frac{\left(\lambda_1 -\lambda_2\right)^2}{\lambda_1^2\lambda_2^2} \nonumber\\
	 &\leq \frac{1}{4}\frac{(\lambda_1 -\lambda_2)^2}{\lambda_1\lambda_2} \label{Ineq-d1}\\
	 &\leq \frac{1}{8}(\lambda_1 -\lambda_2)^2\left(\frac{1}{\lambda_1^2}+\frac{1}{\lambda_2^2}\right) \label{Ineq-d2} \\
	 &= \frac{k^2}{2}(\lambda_1 -\lambda_2)^2
	 \left|  f^2(\lambda_1)+f^2(\lambda_2)+g^2(\lambda_1)+g^2(\lambda_2) \right| \nonumber,
	 \end{align}
	 where Ineq. (\ref{Ineq-d1}) follows from $\frac{1}{\kappa^2} \leq \lambda_1\lambda_2$ and Ineq. (\ref{Ineq-d2}) follows from $\frac{1}{\lambda_1\lambda_2} \leq \frac{1}{2}\left(\frac{1}{\lambda_1^2}+\frac{1}{\lambda_2^2}\right)$.
	
	 \textbf{Cases 2-6:} In these cases, by Lemma~\ref{fg_Lip}, we have
	 \begin{align}
	& \left| f(\lambda_1) - f(\lambda_2)\right| ^2 + \left| g(\lambda_1) - g(\lambda_2)\right| ^2 \nonumber \\
	  &\leq 2c_1^2\kappa^2(\lambda_1-\lambda_2)^2.
	 \end{align}
	
	 In addition, the lower bound of $$ c_2\kappa^2\left( \lambda_1-\lambda_2\right) ^2 \left| f^2(\lambda_1) + f^2(\lambda_2)+ g^2(\lambda_1)+g^2(\lambda_2)\right|  $$ in these cases is
	 $\frac{1}{4}c_2\kappa^2(\lambda_1-\lambda_2)^2$.
     To prove the inequality (\ref{Ineq-c}), we only need $c_2 \geq 8c_1^2$.
	
	 In summary,  $c_2 \geq 8c_1^2$. This completes the proof.	
\end{proof}
Now we give the proof  for (A5) based on Lemma~\ref{fg_sum}.
\begin{proof}
	Recall that $\tilde{\lambda}_k := 2\pi k/t_0$, and $\delta_{jk} = t_0(\lambda_j - \tilde{\lambda}_k)$. We also abbreviate $f:=f(\lambda_j)$, $\tilde f := f(\tilde{\lambda}_k)$, $g:=g(\lambda_j)$ and $\tilde g =  g(\tilde{\lambda}_k)$. We define $p := \bbE [f^2 + g^2]$ and $\tilde{p} := \bbE [\tilde f^2 + \tilde g^2]$.
	
	In order to obtain an upper bound for $ \norm{\ket{x}  - \ket{\tilde x}}$, it suffices to give a lower bound  for $\langle x|\tilde x\rangle$, since it holds that $\norm{\ket{x}  - \ket{\tilde x}}= \sqrt{2(1-\textrm{Re}\langle x|\tilde x\rangle)}$. First, we have	
	\begin{align}\label{innerProduct}
	\langle x|\tilde x\rangle
	&= \frac{\bbE [f\tilde f + g \tilde g]}{\sqrt{p\tilde{p}}} \geq \frac{2\bbE [f\tilde f + g \tilde g]}{p + \tilde{p}},
	\end{align}
   where the inequality follows from
	$\sqrt{p\tilde{p}} \leq \frac{p+\tilde{p}}{2}$. Note that the inequality used here is different from one  in \cite{0811.3171v3}, which together with Lemma \ref{fg_sum} actually simplifies the proof of  (A5).

	Now we have
	\begin{align}
	& ( p + \tilde{p} ) \nonumber -2\bbE \left[f\tilde f + g \tilde g\right]   \nonumber \\
	=&\bbE\left[ |f - \tilde f|^2 + |g - \tilde g|^2\right] \label{eq1} \\
	\leq& \frac{c_2\kappa^2}{t_0^2}\bbE \left[ \delta_{jk}^2 \left( f^2 + \tilde{f}^2+g^2 + \tilde{g}^2 \right) \right] \label{eq2} \\
	\leq& \frac{c_2c_3\kappa^2}{t_0^2}\bbE \left[  \left( f^2 + \tilde{f}^2+g^2 + \tilde{g}^2 \right) \right] \label{eq3}\\
	=& \frac{c_2c_3\kappa^2}{t_0^2}(p+\tilde{p}),
	\end{align}
where Eq. (\ref{eq1}) follows from direct calculation, Ineq. (\ref {eq2}) follows from Lemma~\ref{fg_sum}, and Ineq. (\ref {eq3}) holds because the fact that each $\delta_{jk}^2$ is upper bounded by $c_3$  with $c_3=O(1)$. Therefore, we have
{\setlength\abovedisplayskip{1pt}
\setlength\belowdisplayskip{1pt}
\begin{align}\label{Effgg}
2\bbE \left[ f\tilde f + g \tilde g\right]
\geq \left( 1- \frac{c_2c_3\kappa^2}{t_0^2}\right)(p+\tilde{p}).
\end{align}}
Substituting (\ref{Effgg}) into (\ref{innerProduct}), we get
$\textrm{Re}\langle x|\tilde x\rangle\geq 1 -
O(\kappa^2/t_0^2)$. Hence, $\| \ket{\tilde x} -\ket{x}\| \leq \epsilon$. This completes the proof.
\end{proof}

\section{Appendix}

In this appendix, a detail proof for the Lemma 3 of Ref. \cite{0811.3171v3} is given. 
The Jordan's inequality to be used reads that $\frac{2}{\pi} \leq \frac{\sin x}{x} <1$
for $0 <|x|<\frac{\pi}{2}$.
\begin{proof} We prove the Lemma 3 by considering nine cases as follows.	
	\textbf{Case 1:} $\lambda \geq 1/\kappa$ and $\tilde{\lambda} \geq 1/\kappa$.
	\begin{align}
	&|f- \tilde f |^2 + |g- \tilde g |^2  \nonumber\\
	&= \frac{1}{4\kappa^2 \lambda^2 \tilde{\lambda}^2} \left( \frac{\delta}{t_0}\right) ^2\nonumber\\
	&\le \frac{1}{4\kappa^2 \lambda^2 } \kappa^2\left( \frac{\delta}{t_0}\right) ^2 \label{case1-1} \\
	&= \frac{\kappa^2}{t_0^2}\delta^2\left| f^2 + g^2\right|, 
	\end{align}
	where Ineq. (\ref {case1-1}) follows from $\frac{1}{\tilde{\lambda}^2} \le \kappa^2 $.

	\textbf{Case 2:} $\lambda \geq 1/\kappa$ and $1/2 \kappa \leq \tilde{\lambda} < 1/\kappa$.
	Let $\beta=\frac{\pi}{2}(2\kappa \tilde{\lambda} -1)$ and $\theta=\kappa \pi(\frac{1}{k}- \tilde{\lambda})$, that is $\beta = \frac{\pi}{2}-\theta$. Thus,
	$\sin \beta = \cos \theta$ and $\cos \beta = \sin \theta$. We have
	\begin{align}
	&	|f- \tilde f |^2 + |g- \tilde g |^2   \nonumber\\
	&	= \frac{1}{4} \left(  \frac{1}{\kappa ^2 \lambda ^2} - \frac{2}{\kappa \lambda} \sin \beta+1 \right)  \nonumber\\
	&= \frac{1}{4}\left( \left( \frac{1}{\kappa \lambda} \sin \beta-1\right) ^2+\frac{1}{\kappa^2\lambda^2}\cos ^2 \beta\right)  \nonumber\\
	&= \frac{1}{4\kappa^2 \lambda^2} \left( (\kappa \lambda- \cos \theta) ^2 +\sin^2 \theta \right)  \nonumber\\
	&= \frac{1}{4\kappa^2 \lambda^2} \left( (\kappa \lambda -1  +  1- \cos \theta)^2 + \sin^2 \theta\right)  \nonumber\\
	&\leq \frac{1}{4\kappa^2 \lambda^2} \left( 2(\kappa \lambda - 1)^2+2(1- \cos \theta)^2 +\sin^2 \theta \right) \nonumber\\
	& \leq  \frac{1}{4\kappa^2 \lambda^2} \left( 2(\kappa \lambda - 1)^2+8(\sin^2 \frac{\theta}{2})^2  +\sin^2 \theta \right) \nonumber\\
	&\leq\frac{1}{4\kappa^2 \lambda^2} \left( 2(\kappa \lambda - 1)^2+\frac{\theta ^4}{2}+ \theta ^2\right)  \nonumber\\
	&= \frac{1}{4\kappa^2 \lambda^2} \left( 2(\kappa \lambda - 1)^2+\frac{\pi ^4}{2} (1-\kappa  \tilde{\lambda})^4   + \pi ^2(1-\kappa \tilde{\lambda})^2\right)  \nonumber\\
	&\leq \frac{1}{4\kappa^2 \lambda^2} \left( 2(\kappa \lambda - 1)^2+\frac{\pi ^4}{2}(1-\kappa \tilde{\lambda})^2   + \pi ^2(1-\kappa \tilde{\lambda})^2\right)  \nonumber\\
	&\leq \frac{1}{4\kappa^2 \lambda^2} \left( \frac{\pi ^4}{2}+ \pi ^2\right)  \left( (\kappa \lambda - 1)^2+  (1-\kappa \tilde{\lambda})^2\right)  \nonumber\\
	&\leq \frac{1}{4\kappa^2 \lambda^2} \left( \frac{\pi ^4}{2}+ \pi ^2\right)  \left( (\kappa \lambda - \kappa \tilde{\lambda})^2\right)  \nonumber\\
	&= \left( \frac{\pi ^4}{2}+ \pi ^2\right) \frac{\kappa^2}{t_0^2}\delta^2\left| f^2 + g^2\right|, 
	\end{align}
	where the second inequality follows from the half-angle formula for cosine functions ,
	the third inequality follows from Jordan's inequality and others follow from direct calculation.
	
	\textbf{Case 3:} $\lambda \geq 1/\kappa$ and $\tilde{\lambda} < 1/ 2 \kappa$. We have
	\begin{align}
	\frac{\kappa^2 \lambda^2+1}{\kappa^2 (\lambda -\tilde{\lambda})^2}
	\leq \frac{2\kappa^2 \lambda^2}{\kappa^2 (\lambda -\tilde{\lambda})^2} 
	=2(\frac{1}{1-\frac{\tilde{\lambda}}{\lambda}})^2  
	\leq 8,
	\end{align}
	where the first inequality follows from $1 \leq \kappa^2 \lambda^2$ and the second inequality follows 
	from $\frac{\tilde{\lambda}}{\lambda} < \frac{1}{2}$. Thus, we have
    $\frac{1}{4}(1+\frac{1}{\kappa^2\lambda^2}) \leq 8 \frac{\kappa^2 (\lambda -\tilde{\lambda})^2}{4\kappa^2 \lambda^2}$.
	That is,
	\begin{align}
	|f- \tilde f |^2 + |g- \tilde g |^2 	
	&< 8\frac{\kappa^2}{t_0^2}\delta^2\left| f^2 + g^2\right|. 
	\end{align}
	
	\textbf{Case 4:} $1/2 \kappa \leq \lambda < 1/\kappa$ and $\tilde{\lambda} \geq 1/\kappa$.
	Similar to case 2, we have the same result.
	
	\textbf{Case 5:} $1/2 \kappa \leq \lambda < 1/\kappa$ and $1/2 \kappa \leq \tilde{\lambda} < 1/\kappa$.
	\begin{align} 	
	|f- \tilde f |^2 + |g- \tilde g |^2 
	=& \frac{1}{2} \left( 1 - \cos\left( \pi \kappa(\lambda - \tilde{\lambda})\right)  \right)  \nonumber \\
	=& \sin ^2\left( \frac{\pi \kappa(\lambda - \tilde{\lambda})}{2}\right) \label{case5-1} \\
	<& \frac{\pi^2}{4}\kappa^2\left( \lambda - \tilde{\lambda}\right) ^2 \label{case5-2} \\
	=& \pi^2 \frac{\kappa^2}{t_0^2}\delta^2\left| f^2 + g^2\right|, 
	\end{align}
	where Eq. (\ref{case5-1}) is based on the half-angle formula for cosine functions and 
	Ineq. (\ref {case5-2}) follows from Jordan's inequality.
	
	\textbf{Case 6:} $1/2 \kappa \leq \lambda < 1/\kappa$ and $\tilde{\lambda} < 1/ 2 \kappa$.
	\begin{align}
	|f- \tilde f |^2 + |g- \tilde g |^2 
	& =\frac{1}{2}\left( 1 - \cos \left( \frac{\pi}{2}(2\kappa \lambda -1)\right)  \right) \nonumber\\
	&= \sin ^2 \left( \frac{\pi}{4}(2\kappa \lambda -1)\right) \label{case6-1}\\
	&\leq \left( \frac{\pi}{4}(2\kappa \lambda -1)\right) ^2 \label{case6-2}\\
	&= \left( \frac{\pi}{2}(\kappa \lambda -1/2)\right) ^2 \nonumber\\
	&\leq \left( \frac{\pi}{2}(\kappa \lambda -\kappa \tilde{\lambda})\right) ^2 \label{case6-3}\\
	&= \frac{\pi^2}{4}\kappa^2\left( \lambda - \tilde{\lambda}\right) ^2 \nonumber \\
	&= \pi^2 \frac{\kappa^2}{t_0^2}\delta^2\left| f^2 + g^2\right|, 
	\end{align}
	where Eq. (\ref {case6-1}) holds on account of the half-angle formula for 
	cosine functions, Ineq. (\ref {case6-2}) follows from Jordan's inequality 
	and Eq. (\ref {case6-3}) follows from $\kappa \tilde{\lambda} < 1/2$.
	
	\textbf{Case 7:} $\lambda < 1/ 2 \kappa$ and $\tilde{\lambda} \geq 1/\kappa$.
	Similar to case 3, we have the same result.
	
	\textbf{Case 8:} $\lambda < 1/ 2 \kappa$ and $1/2 \kappa \leq \tilde{\lambda} < 1/\kappa$.
	Similar to case 6, we have the same result.
	
	\textbf{Case 9:} $\lambda < 1/ 2 \kappa$ and $\tilde{\lambda} < 1/ 2 \kappa$. The Lemma 3 holds for $c\geq 0$.
	
	In summary, $c \geq \frac{\pi^4}{2} + \pi^2$. This completes the proof.
\end{proof}
\end{document}